\documentclass{article}
\usepackage[utf8]{inputenc}
\usepackage[totalwidth=400pt, totalheight=600pt]{geometry}
\usepackage{booktabs} 
\usepackage{amsmath,amssymb,amsthm,tikz,comment}
\usepackage{hyperref}
\usetikzlibrary{patterns,shapes,positioning}
\usepackage{mathtools}
\usepackage[ruled]{algorithm2e} 

\SetAlFnt{\small}
\SetAlCapFnt{\small}
\SetAlCapNameFnt{\small}
\SetAlCapHSkip{0pt}
\IncMargin{-\parindent}

\newcommand{\XP}{\mathbf{XP}}

\newcommand{\FPT}{\mathbf{FPT}}

\newcommand{\Poly}{\mathbf{P}}

\newcommand{\NP}{\mathbf{NP}}

\newcommand{\W}{\mathbf{W}}

\newcommand{\cE}{\mathcal{E}}

\newcommand{\cR}{\mathcal{R}}

\newcommand{\cO}{\mathcal{O}}

\newcommand{\cA}{\mathcal{A}}

\newcommand{\yields}[1]{\overset{#1}{\rightarrow}}

\newtheorem{theorem}{Theorem}
\newtheorem{lemma}[theorem]{Lemma}
\newtheorem{claim}[theorem]{Claim}

\newtheorem{proposition}[theorem]{Proposition}
\newtheorem{observation}[theorem]{Observation}

\title{On the parameterized complexity of manipulating\\ Top Trading Cycles}  
\author{
William Phan \\ North Carolina State University \\\href{mailto:wphan@ncsu.edu}{wphan@ncsu.edu}
\and
Christopher Purcell \\ Aalto University \\ \href{mailto:christopher.purcell@aalto.fi}{christopher.purcell@aalto.fi}
}

\begin{document}
\maketitle

\begin{abstract}

We study the problem of exchange when 1) agents are endowed with heterogeneous indivisible objects, and 2) there is no money. In general, no rule satisfies the three central properties Pareto-efficiency, individual rationality, and strategy-proofness \cite{Sonmez1999}.
Recently, it was shown that Top Trading Cycles is $\NP$-hard to manipulate \cite{FujitaEA2015}, a relaxation of strategy-proofness.
However, parameterized complexity is a more appropriate framework
for this and other economic settings. Certain aspects of the problem - number of objects each agent brings to the table, goods up for auction, candidates in an election \cite{consandlang2007}, legislative figures to influence \cite{christian2007complexity} - may face natural bounds or are fixed as the problem grows. We take a parameterized complexity approach
to indivisible goods exchange for the first time. Our results
represent good and bad news for TTC. When the size of 
the endowments $k$ is a fixed
constant, we show that the computational task of manipulating TTC can be performed
in polynomial time. On the other hand, we show that this 
parameterized problem is $\W[1]$-hard, and therefore unlikely
to be \emph{fixed parameter tractable}.

\end{abstract}

\section{Introduction}

In many economic environments, agents are endowed with heterogeneous 
indivisible objects, exchange is desirable, and there is no money.
For example, workers trading
shifts/tasks/assignments, users sharing time blocks on a supercomputer,
etc.
A rule or mechanism
recommends for each possible profile of preferences and endowments
a re-allocation of the objects. The general program is to define desirable properties (axioms)
and design rules that satisfy as many of them as possible. Three central
and well-studied properties are \emph{Pareto-efficiency} (no rearrangement
could make all agents at least as well off, and some better off),
\emph{individual rationality} (no agent is worse off than they
started), and \emph{strategy-proofness} (no agent is better off reporting
a lie than their true preference). Unfortunately, in this environment
and many others, there is no rule satisfying all three \cite{Sonmez1999}. This motivates the study of properties which
are relaxations of strategy-proofness.

We focus on the Top Trading Cycles (TTC) mechanism due to Gale.
TTC is strategy-proof when the endowments are of size 1,
but when the endowments are multiple, manipulation is possible.
It was recently shown that TTC is $\NP$-{hard to manipulate}.
This result suggests that there may not be an incentive to
manipulate TTC, as agents have bounded computational resources.
However, the result could be very misleading to policy makers.
As we will see, manipulating TTC can be done in time approximately
$n^k$, where $n$ is the number of goods and $k$ is the size of
the endowments.
This does not contradict \cite{FujitaEA2015} because the
size of the endowment is part of the input to the problem 
in that paper. Indeed, the hardness reduction makes
the implicit assumption that an agent may have a number of
goods that grows with the number of agents.

Still, when $k$ is large, the algorithm is not practical.
If TTC could be manipulated in time $f(k)n^c$ for some 
function $f$ and constant $c$ (i.e. if the problem
were {\em fixed parameter tractable}) Then TTC would
cease to be an attractive rule in the general case.
The parameterized complexity of manipulating TTC is therefore
an important and interesting question.

Our previously mentioned algorithm is bad news for TTC, but
our main result represents good news. We show that
manipulating TTC is $\W[1]$-hard (which we define properly later)
and therefore very unlikely to be fixed parameter tractable.


\subsection*{Related Literature}

We discuss two bodies of related work: the progression of the study
of indivisible objects exchange in the economics literature, and recent
work in computational social choice. 

In 1974, Shapley \& Scarf introduced the problem of exchanging indivisible
objects without money, also known as the Housing Market \cite{Shapley1974}.
Each agent is endowed one object, may consume one object, and has
strict preferences over all objects. They showed that the Top Trading
Cycles algorithm (attributed to David Gale) could be used to
compute a core allocation. It turns out that the core is unique, and
the rule derived from recommending the core for each preference profile
is the only \emph{Pareto-efficient}, \emph{individually rational},
and \emph{strategy-proof} rule \cite{Anno2015,Roth1977weak,ma1994strategy,miyagawa2002strategy,sethuraman2016alternative,Sonmez1999,svensson1999strategy}. 

Subsequently, the literature considered various generalizations of
the model and/or applications of TTC: the case of no ownership \cite{hylland1979efficient,zhou1990conjecture,svensson1999strategy},
the case where some agents may own nothing (generalizing the two previous
cases) \cite{abdulkadirouglu1999house,sonmez2010house,svensson1999strategy,papai2000strategyproof,pycia2017incentive},
fairer probabilistic rules \cite{abdulkadirouglu1998random,Aziz2015generalizing,athanassoglou2011house,BasteckFair,bogomolnaia2012probabilistic,bogomolnaia2001new,carroll2014general,budish2013designing,che2010asymptotic,harless2017endowment,Liu2016ordinal}
allowing for indifferences in preferences \cite{alcalde2011exchange,aziz2012housing,bogomolnaia2005strategy,jaramillo2012difference,plaxton2013simple,quint2004houseswapping,saban2013house,sonoda2014two},
School Choice \cite{abdulkadirouglu2003school,dur2012characterization,morrill2015two,dur2017competitive,morrill2015making,morrill2013alternative},
and dynamic environments \cite{kurino2014house,schummer2013assignment}.
Several authors considered manipulation not by preference misreport
but by the merging/splitting/withholding of endowments \cite{atlamaz2007manipulation,bu2014merging}. 

Our paper considers the case when each agent may be endowed with multiple
objects \cite{papai2003strategyproof,papai2007exchange,sonoda2014two,todo2014strategyproof}.
As mentioned, there is no rule satisfying all three properties \cite{Sonmez1999}.
In response to this, \cite{papai2007exchange} weakens the \emph{Pareto-efficiency}
requirement to \emph{range-efficiency }and characterizes the resulting
family of rules on a large preference domain. In a complementary manner, \cite{FujitaEA2015b} shows that in the Lexicographic domain of
preferences ATTC satisfies the properties when \emph{strategy-proofness}
is weakened to \emph{NP-hard to Manipulate}. An immediate corollary
of their result is the extension of the statement to larger domains.
Other authors consider an environment where objects have types \cite{klaus2008coordinate,konishi2001shapley,monte2015centralized,mackin2015allocating,sikdar2017mechanism},
or where there is no ownership \cite{budish2011combinatorial,budish2012multi}. 

The idea of studying the complexity of manipulation was proposed by
\cite{bartholdi1989computational} in response to \cite{gibbard1973manipulation,satterthwaite1975strategy}\textemdash the
latter showing that, in the environment of voting, requiring \emph{strategy-proofness}
leads to dictatorship. We refer the reader to surveys in the subsequent
computational social choice literature \cite{Conitzer2016Barriers,faliszewski2010using,faliszewski2010ai},
and highlight works that take the parameterized complexity approach
\cite{betzler2008fixed,betzler2010parameterized}.

\section{Preliminaries}

Let $N$ be a set of {\em agents} and let $\mathcal{O}$ be a
set of {\em objects}. Let $\omega=\{\omega_i\}_{i\in N}$ be a
set of subsets of $\mathcal{O}$ such that $\omega_i \cap
\omega_j = \emptyset$ for all $i \not= j$ and $\cup \omega_i
= \mathcal{O}$; we call $\omega_i$ the {\em endowment} of
agent $i$.  If a good $\alpha$ is in $\omega_i$ then we say
that agent $i$ is the {\em owner} of $\alpha$ and that 
$a(\alpha)=i$. Let $\mathcal{R}$ be the set of all relations
over $2^{\mathcal{O}}$ that are complete, transitive and
anti-symmetric.  Let $R=\{R_i\}_{i \in N}$ be an element of
$\mathcal{R}^{|N|}$; we call $R_i$ the {\em preference
relation} of agent $i$, and $R$ the {\em preference profile}. We denote the strict component of
$R_i$ by $P_i$, i.e. $X P_i Y$ if and only if $X R_i Y$ and
$\neg Y R_i X$. 
We say that $\cE=(N,\mathcal{O}, \omega,
R)$ is an {\em economy}. If $|\omega_i|=1$ for all $i$ we say $E$
is a {\em housing market} and otherwise a {\em generalised
housing market}.  If $z=\{z_i\}_{i \in N}$ is a set of
disjoint
subsets of $\mathcal{O}$ such $\cup z_i = \mathcal{O}$
we say that z is an {\em allocation} for the economy $\cE$.
Note that the endowment is an allocation. A {\em rule}
$\phi: \mathcal{R}^{|N|} \rightarrow Z$ recommends an
allocation given a particular preference profile.
We denote by $\phi_i(R)$ the allocation of agent $i$ under
$\phi$ at $R$; if $\phi(R)=z$ then $\phi_i(R) = z_i$.

\paragraph*{Properties of rules}

Following standard notation
we write $(R'_i,R_{-i})$ to be the preference
profile obtained from $R$ by replacing $R_i$ with $R'_i$. We say
that $R'_i$ is a {\em misreport} for agent $i$.
A misreport $R'_i$ is {\em beneficial} under $\phi$ if $\phi_i(R'_i,R_{-i}) P_i \phi_i(R)$.
A rule $\phi$ is {\em strategy-proof} if for all economies,
no agent has a beneficial misreport under $\phi$. 
We emphasise that this property implies that no agent
can lie even if they have full information about the
preferences of the other agents.  One trivial example of a
strategy-proof rule is the ``no deal'' rule $\phi(R) =
\omega$, but this rule is clearly sub optimal. We say that
an allocation $z$ is {\em Pareto-optimal} for $R$ if 
for any $z'$ we have that
$z_i R_i z'_i$ and for at least one agent $j$ we do not have
that $z'_i R_i z_i$. We say that a rule is {\em Pareto-efficient} (PE) if
it always recommends a Pareto-optimal allocation. 
If an agent might be worse off after the trade
according to their own preference relation, there is no
incentive to take part. A rule is said to be {\em
individually rational} (IR) if $\phi_i(R) R_i \omega_i$ for each agent $i$.

\paragraph*{Graph theory}

In order to describe the rule that is the focus on 
this paper, and our results, we require some definitions from graph theory.
We follow the definitions in \cite{BondyMurty2008}, but
we now recall some important notions.
A (directed) {\em walk} in a graph is an ordered multiset $(v_1,e_1,v_2,e_2,\ldots,e_k,v_{k+1})$ where $v_i$ is a vertex and $e_i$ is a (directed) edge from $v_i$ to $v_{i+1}$ for $1\leq i < k$.
A path is a walk where no vertex is repeated. A cycle is a path plus an edge from $v_k$ to $v_1$.
A {\em clique} in a graph $G$ is a set of vertices $C$
such that there is an edge between every pair of vertices in $C$.
A {\em proper colouring} (or simply a colouring) of a
graph $G$ is an assignment of colours to its vertices 
such that no edge joins two vertices of the same colour.

\begin{figure}[b]

\begin{center}
\begin{tikzpicture}

\tikzstyle{every node} = [circle, fill=white]

\node (a) at (0,2) {$\alpha$};

\node (b) at (2,2) {$\beta$};

\node (g) at (2,0) {$\gamma$};

\node (d) at (0,0) {$\delta$};

\node[text=gray] (a2) at (4,2) {$\alpha$};

\node (b2) at (6,2) {$\beta$};

\node[text=gray] (g2) at (6,0) {$\gamma$};

\node (d2) at (4,0) {$\delta$};

\foreach \from/\to in {d/a,d2/b2,b/g}
\draw [->] (\from) -- (\to);

\foreach \from/\to in {d/b}
\draw [dotted, ->] (\from) -- (\to);

\path (b) edge [loop above, dotted, ->,out=45,in=135, looseness= 6]  (b);

\path (b2) edge [loop above, ->,out=45,in=135, looseness= 6]  (b2);

\def\x{30}
\path (a) edge [->,out=-\x,in=90+\x, ]  (g);

\path (g) edge [->,out=180-\x,in=-90+\x, ]  (a);

\path (a) edge [ dotted, ->,out=215,in=135, ]  (d);

\path (g) edge [ dotted, ->,out=45,in=-45, ]  (b);

\end{tikzpicture}
\caption{The first step of the TTC procedure (dotted edges denote second preferences)}
\label{fig:ttc}
\end{center}

\end{figure}
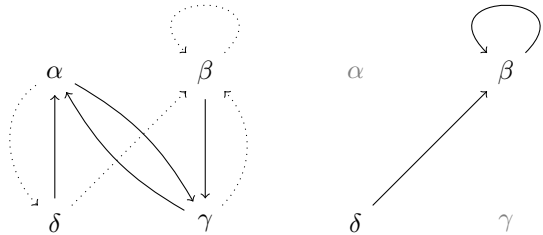

\paragraph*{Top Trading Cycles}

For housing markets, there is exactly one rule that is
simultaneously SP, PE and IR \cite{Sonmez1999}. 
The allocation that the rule
recommends can be obtained by following the {\em Top Trading
Cycles} procedure which we define below (see Figure~\ref{fig:ttc} for an example). In a housing market,
the endowments are singletons and so any IR rule must also
produce an assignment whose elements are singletons. We can assume
that each agent has a strict preference relation over $\mathcal{O}$.
We introduce the following useful notation: if $\alpha P_i \beta$ 
for all $\beta$ in some subset $\mathcal{O}'$ of the goods, we say
that agent $i$ {\em topranks} $\alpha$ in $\mathcal{O}'$ (if
$\mathcal{O}'=\mathcal{O}$ we say simply that $i$ topranks $\alpha$).\\

\noindent{\bf Top Trading Cycles}

\noindent{\bf Input:} An economy $\mathcal{E}=(N,\mathcal{O},\omega,\cR)$.

\noindent{\bf Output:} An assignment $z$.

\begin{enumerate}

	\item Create a directed graph $H_1$ whose vertex set is $V_1=\mathcal{O}$ with an edge 
	$(\alpha,\beta)$ in $E_t$ if and only if $a(\alpha)$ topranks $\beta$.
	
	\item For $t=1,2,\ldots$:
		\begin{enumerate}
	
		\item If $V_t$ is empty, stop.
		
		\item \label{step:arb} Otherwise, select an arbitrary cycle $(\gamma_1,\gamma_2,\ldots,\gamma_j)$ in $H_t$. 
			\begin{enumerate}
		
			\item Add $\gamma_1$ to $z_{a(\gamma_j)}$, and add $\gamma_{i+1}$ to $z_{a(\gamma_i)}$ 
			for $1 \leq i < j$.

			\item Let $V_{t+1}=V_t \setminus \{\gamma_1,\ldots,\gamma_j\}$.
			
			\item Let $H_{t+1}$ be the directed graph on $V_{t+1}$ with an edge $(\alpha,\beta)$ in $E_{t+1}$
			if and 
			only if $a(\alpha)$ topranks $\beta$ in $V_{t+1}$.
			
		\end{enumerate}
		
	\end{enumerate}

\end{enumerate}

In Step~\ref{step:arb}, an arbitrary cycle was selected.
Indeed, the order that cycles are removed from the graph
in TTC does not matter. However, it will be useful to refer
to the time at which goods are traded under TTC. In order
to make this notion well-defined, we insist that an economy
$\mathcal{E}$ is equipped with a total ordering over 
$\mathcal{O}$; we can refer to the {\em first} good in $\cO$. 
Observe that for each good $\alpha$ in $V_t$ there is a unique
directed walk with no repeated edges starting at $\alpha$; we call
this the {\em trading walk} starting at $\alpha$.
Since every element of $V_t$ has 
outdegree 1, this walk must contain a cycle. 
We define the cycle in Step~\ref{step:arb} to be
the one contained in the trading walk starting
at the first good in $V_t$. 
We can now define
the trading time $tt_{\cE}(\alpha)$ of a good $\alpha$ in a run of TTC on $\cE$ to be the 
least integer $t$ such that $\alpha \in V_t \setminus V_{t+1}$.
When the economy is unambiguous, we write $tt(\alpha)=tt_{\cE}(\alpha)$.
The following simple observations will be very useful later.

\begin{observation}\label{obs:tt}
	Suppose $\alpha$ and $\beta$ are goods in $V_t$ during a run of TTC. If the owner of $\alpha$ topranks $\beta$ in $V_t$,  then $tt(\alpha)\geq tt(\beta)$.
\end{observation}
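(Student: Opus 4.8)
The plan is to argue by contradiction, ruling out the possibility that $\alpha$ is traded strictly before $\beta$. Suppose for contradiction that $tt(\alpha) < tt(\beta)$, and write $s = tt(\alpha)$ for the step at which $\alpha$ is removed, so that $\alpha \in V_s \setminus V_{s+1}$.

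The key preliminary step is a monotonicity property of toprankings, which guarantees that the out-edge of $\alpha$ keeps pointing at $\beta$ as the graph shrinks. Concretely, I would first establish the following claim: if the owner of $\alpha$ topranks $\beta$ in $V_t$ and $V' \subseteq V_t$ with $\beta \in V'$, then the owner of $\alpha$ also topranks $\beta$ in $V'$. This is immediate from the definition of toprank, since $\beta\, P_{a(\alpha)}\, \gamma$ holds for every $\gamma \in V_t$ other than $\beta$, and hence in particular for every such $\gamma \in V'$, because $V' \subseteq V_t$. Because the vertex sets produced by the procedure are nested, with $V_{s+1} \subseteq V_s \subseteq \cdots \subseteq V_t$, applying this claim along the run shows that the edge $(\alpha,\beta)$ is present in $H_s$ whenever both $\alpha$ and $\beta$ lie in $V_s$ and $s \geq t$.

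Now I apply this at the step $s = tt(\alpha)$. Since $tt(\beta) > s$ by assumption, $\beta$ has not yet been traded and therefore $\beta \in V_s$; and $\alpha \in V_s$ because $\alpha$ is removed precisely at step $s$. By the monotonicity claim the edge $(\alpha,\beta)$ lies in $H_s$, and since every vertex of $H_s$ has out-degree exactly one, $\beta$ is the \emph{unique} out-neighbour of $\alpha$. But $\alpha$ is removed at step $s$ only because it lies on the cycle selected in Step~\ref{step:arb}; the successor of $\alpha$ along that cycle is its unique out-neighbour, namely $\beta$. Hence $\beta$ lies on the same removed cycle and is deleted at the same step, giving $tt(\beta) = s = tt(\alpha)$ and contradicting $tt(\beta) > s$. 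Therefore $tt(\alpha) \geq tt(\beta)$, as required.

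I expect the only genuinely load-bearing ingredient to be the monotonicity claim, and it is routine. The main point to handle carefully is the bookkeeping around the vertex sets: one must verify that $\beta$ truly survives to step $s$ — which is exactly what the assumption $tt(\beta) > s$ supplies, placing $\beta \in V_s$ — so that the persistent edge is actually available in $H_s$, and then that membership on the selected cycle forces $\alpha$ and $\beta$ to be removed simultaneously.
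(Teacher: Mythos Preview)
Your argument is correct. The paper states this as an \emph{Observation} and does not supply a proof, so there is nothing to compare against; your contradiction argument via the persistence of the out-edge $(\alpha,\beta)$ and the out-degree-one structure of $H_s$ is exactly the natural verification the paper leaves implicit.
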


\begin{observation}\label{obs:tt2}
Suppose $\alpha_i \in \omega_i$ and $\alpha_j \in \omega_j$. If $\alpha_i R_i \alpha_j$ and $\alpha_j R_j \alpha_i$. Then $tt(\alpha_i) \not= tt(\alpha_j)$.
\end{observation}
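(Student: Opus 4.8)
The plan is to argue by contradiction: suppose $tt(\alpha_i) = tt(\alpha_j) = t$. Each step of the deterministic procedure removes exactly one cycle, namely the unique cycle contained in the trading walk from the first good of $V_t$, so the set $V_t \setminus V_{t+1}$ of goods leaving at time $t$ is precisely the vertex set of a single cycle $C$ in $H_t$. Hence equality of trading times forces $\alpha_i$ and $\alpha_j$ onto a common cycle $C$. Since the endowments are disjoint and $i \neq j$ we have $\alpha_i \neq \alpha_j$, so $C$ has length at least two. The goal is then to show that the \emph{standoff} hypotheses $\alpha_i R_i \alpha_j$ and $\alpha_j R_j \alpha_i$ are incompatible with $\alpha_i$ and $\alpha_j$ sitting on a common cycle.

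First I would extract the local structure of $C$ at the two distinguished vertices. Let $\beta$ be the successor of $\alpha_i$ on $C$ and $\beta'$ the successor of $\alpha_j$. By the definition of the edges of $H_t$, agent $i$ topranks $\beta$ in $V_t$ and agent $j$ topranks $\beta'$ in $V_t$; since $\alpha_i,\alpha_j \in V_t$ this gives $\beta R_i \alpha_i$ and $\beta' R_j \alpha_j$. Chaining these with the hypotheses via transitivity yields $\beta R_i \alpha_j$ and $\beta' R_j \alpha_i$, and anti-symmetry then rules out $\beta = \alpha_j$ and $\beta' = \alpha_i$ (either equality would force $\alpha_i = \alpha_j$). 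In words, agent $i$ does not pass $\alpha_i$ directly to $j$ and agent $j$ does not pass $\alpha_j$ directly to $i$, so $\alpha_i$ and $\alpha_j$ are non-adjacent on $C$ in both directions. This already settles the length-two case, since a cycle of length two would make the two goods mutual successors.

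Turning this local information into a contradiction for longer cycles is where I expect the real difficulty to lie. The hypotheses only pin down the immediate successors of $\alpha_i$ and $\alpha_j$, whereas $C$ could in principle run from $\alpha_i$ to $\alpha_j$ through goods owned by third parties, and Observation~\ref{obs:tt} gives no leverage here: applied around $C$ it merely confirms that all trading times on a cycle coincide. The plan would therefore be to propagate preferences along $C$, following the arc $\beta, \ldots, \alpha_j$ as a chain of ``topranked'' relations and the symmetric arc $\beta', \ldots, \alpha_i$, and to try to show that the two strict preferences $\alpha_i P_i \alpha_j$ and $\alpha_j P_j \alpha_i$ cannot both be realised while both arcs close up. Ruling out every intermediary configuration that realises both arcs simultaneously is the crux of the argument; the entire weight of the proof rests on this long-cycle case, and I would treat closing it as the main obstacle.
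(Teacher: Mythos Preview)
Your opening setup is sound: if $tt(\alpha_i)=tt(\alpha_j)=t$ then both goods lie on the single cycle removed at step $t$, and your argument correctly rules out the length-two case by showing that the out-edge of $\alpha_i$ in $H_t$ cannot be $\alpha_j$ (and symmetrically).

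However, the long-cycle case that you flag as ``the main obstacle'' is not merely hard to close---it cannot be closed, because the observation as stated is false. Take four agents with singleton endowments $\omega_1=\{a\}$, $\omega_2=\{b\}$, $\omega_3=\{c\}$, $\omega_4=\{d\}$ and preferences
\[
R_1:\; c,a,b,d,\qquad R_2:\; d,b,a,c,\qquad R_3:\; b,\ldots,\qquad R_4:\; a,\ldots
\]
Then $H_1$ is the $4$-cycle $a\to c\to b\to d\to a$, so $tt(a)=tt(b)=1$. With $\alpha_1=a$ and $\alpha_2=b$ we have $a\,R_1\,b$ (since $a$ precedes $b$ in $R_1$) and $b\,R_2\,a$ (since $b$ precedes $a$ in $R_2$), so the hypotheses hold but the conclusion fails. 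Your plan to ``propagate preferences along $C$'' cannot work in general: the hypotheses only constrain the owners of $\alpha_i$ and $\alpha_j$, and say nothing about the intermediate owners on a longer cycle, so there is no chain of relations to propagate.

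The paper itself offers no proof of this observation, and in the two places it is invoked the conclusion actually follows from extra structure in the constructed economy (for instance, when two goods have owners who both toprank the \emph{same} third good, they can never sit on a common simple cycle). But that is a different and stronger hypothesis than the one written. So you have correctly located the gap; the right conclusion is that the statement needs an additional assumption, not that your argument is missing a trick.
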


\begin{observation}\label{obs:tt3}
Suppose $\alpha$ is a good in $V_t$, and the trading
walk $W$ starting at $\alpha$ in $H_t$ is not a trading cycle.
Let $\beta$ be a good on the trading cycle in $W$. Then $tt(\beta) < tt(\alpha)$. 
\end{observation}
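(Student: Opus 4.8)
The plan is to combine the monotonicity of trading time along directed edges (Observation~\ref{obs:tt}) with the structural fact that TTC removes exactly one cycle at each step. Write the trading walk as $W=(\alpha=u_1,u_2,\ldots,u_p)$, where $u_q,u_{q+1},\ldots,u_p$ form the trading cycle contained in $W$ and $u_1,\ldots,u_{q-1}$ is the tail leading into it. The hypothesis that $W$ is not a trading cycle means precisely that $q>1$, so $\alpha$ lies strictly on the tail and $\beta\in\{u_q,\ldots,u_p\}$. Each edge $(u_i,u_{i+1})$ records that $a(u_i)$ topranks $u_{i+1}$ in $V_t$, so Observation~\ref{obs:tt} gives $tt(u_i)\geq tt(u_{i+1})$ all along the walk. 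Applying this around the cycle produces a cyclic chain $tt(u_q)\geq \cdots \geq tt(u_p)\geq tt(u_q)$, which forces all cycle vertices to share a common trading time $t^{*}$. In particular $tt(\beta)=t^{*}$ and, by monotonicity down the tail, $tt(\alpha)\geq t^{*}$.

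It remains to rule out equality, and here I would argue by contradiction. Suppose $tt(\alpha)=t^{*}$. Since the trading times are non-increasing along $W$ and pinched between $tt(\alpha)=t^{*}$ and $tt(u_q)=t^{*}$, every vertex of $W$ must have trading time exactly $t^{*}$. Because TTC removes a single cycle at each step and $tt$ records the step at which a good leaves the graph, this means every vertex of $W$ lies in $V_{t^{*}}$ and is removed at step $t^{*}$; in particular, all of them are still present in $H_{t^{*}}$ (note $t^{*}\geq t$, as every vertex of $W$ already lies in $V_t$, so $V_{t^{*}}\subseteq V_t$).

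Next I would show that the walk $W$ survives intact in $H_{t^{*}}$. The key point, and the step I expect to require the most care, is that deleting goods can never dislodge an owner's top choice so long as that choice remains available: if $a(u_i)$ topranks $u_{i+1}$ in $V_t$ and both $u_i,u_{i+1}\in V_{t^{*}}\subseteq V_t$, then $u_{i+1}$ is still topranked in the smaller set $V_{t^{*}}$, so the edge $(u_i,u_{i+1})$ persists in $H_{t^{*}}$. Consequently the forward orbit of $\alpha$ in the out-degree-one graph $H_{t^{*}}$ is exactly $W$: it runs down the tail and then loops forever inside the cycle $\{u_q,\ldots,u_p\}$, never returning to $\alpha$ because $q>1$. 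Hence $\alpha$ lies on no cycle of $H_{t^{*}}$. But $tt(\alpha)=t^{*}$ would force $\alpha$ onto the unique cycle removed at step $t^{*}$, a contradiction. Therefore $tt(\alpha)>t^{*}=tt(\beta)$, as required.
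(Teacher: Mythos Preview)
Your argument is correct. You use Observation~\ref{obs:tt} to get weak monotonicity along the walk, force equality around the cycle, and then rule out equality on the tail by observing that the edges of $W$ persist in $H_{t^{*}}$ (since removing goods cannot displace a still-present top choice), so $\alpha$ cannot lie on the unique cycle removed at step $t^{*}$. Each of these steps is sound; in particular the persistence step, which you rightly flag as the one requiring care, is exactly the monotonicity-of-toprank statement: if $u_{i+1}$ is $a(u_i)$'s favourite in $V_t$ and $u_{i+1}\in V_{t^{*}}\subseteq V_t$, it remains the favourite in $V_{t^{*}}$.

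There is nothing to compare against: the paper states Observation~\ref{obs:tt3} without proof, treating it (together with Observations~\ref{obs:tt}, \ref{obs:tt2}, and \ref{obs:tt4}) as a ``simple observation'' to be used later. Your write-up supplies the details the paper omits, and does so along the natural line one would expect.
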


\begin{observation}\label{obs:tt4}
Suppose $\beta \in \text{TTC}_i(R)$ and $\alpha P_i \beta$. Then $tt(\alpha) < tt(\beta)$.
\end{observation}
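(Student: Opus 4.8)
The plan is to trace through the TTC procedure to pinpoint the exact moment agent $i$ acquires $\beta$, and then exploit the fact that the good agent $i$ trades away at that moment points to agent $i$'s single most-preferred remaining good. Everything will follow from reading the cycle structure off the assignment step together with the definition of topranking.

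First I would unpack the hypothesis $\beta \in \mathrm{TTC}_i(R)$. By the assignment rule in Step~\ref{step:arb}, a good is placed into $z_{a(\gamma_\ell)}$ precisely as the successor $\gamma_{\ell+1}$ of one of the owner's goods on a cycle that is removed from the graph. Hence there is a step $t$ and a good $\gamma \in \omega_i$ lying on the cycle removed at step $t$ such that $(\gamma,\beta)$ is an edge of $H_t$ and both $\gamma$ and $\beta$ leave the graph at this step; in particular $\gamma,\beta \in V_t$ and $tt(\gamma)=tt(\beta)=t$.

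The key step is to interpret the edge $(\gamma,\beta)$ via the definition of $H_t$: the edge $(\gamma,\beta)$ is present exactly when $a(\gamma)=i$ topranks $\beta$ in $V_t$, i.e. $\beta\,P_i\,\delta$ for every other $\delta\in V_t$. Now suppose toward a contradiction that $\alpha\in V_t$. Since $P_i$ is the strict part of an antisymmetric relation, $\alpha\,P_i\,\beta$ forces $\alpha\neq\beta$, so topranking would yield $\beta\,P_i\,\alpha$, contradicting the hypothesis $\alpha\,P_i\,\beta$. Therefore $\alpha\notin V_t$. Finally, because the vertex sets are nested ($V_1\supseteq V_2\supseteq\cdots$) and every good is eventually removed, $\alpha\notin V_t$ means $\alpha$ was traded at some strictly earlier step, giving $tt(\alpha)<t=tt(\beta)$, as required.

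The only delicate point---and the part I would write most carefully---is the first step: extracting from the bookkeeping in Step~\ref{step:arb} that the good agent $i$ receives is the out-neighbour of one of $i$'s \emph{own} goods at precisely the step that good trades, so that the topranking condition is about $i$'s preferences over $V_{tt(\beta)}$. Once that is pinned down, the rest is immediate from the definitions of topranking and of trading time, and in fact does not require Observations~\ref{obs:tt}--\ref{obs:tt3}, though it is very much in the same spirit as Observation~\ref{obs:tt}.
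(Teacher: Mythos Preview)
Your argument is correct. The paper states Observation~\ref{obs:tt4} without proof, so there is nothing to compare against; what you have written is exactly the natural justification one would supply, reading off from Step~\ref{step:arb} that $\beta$ is the out-neighbour in $H_{tt(\beta)}$ of some $\gamma\in\omega_i$, so that $i$ topranks $\beta$ in $V_{tt(\beta)}$ and hence $\alpha\notin V_{tt(\beta)}$.
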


Observe that TTC (as we have described it above) does not
require the endowments to be singletons. In other words,
it can be applied in the setting of generalised housing 
markets.
However, we know that in this case TTC is not strategy-proof
in general. For example, let $\cE=(N,\cO,\omega,R)$ be
the economy shown in Figure~\ref{fig:ttc}.
In this economy, $a(\delta)$ and $a(\gamma)$ have the same
preferences. Suppose that $a(\delta)$ and $a(\gamma)$ are the same;
let $\omega_1=\{\gamma,\delta\},\omega_2=\{\alpha\},\omega_3=\{\beta\}$
for instance. If agent 1 prefers the bundle $\{\alpha,\beta\}$ to 
its assignment $\{\alpha,\delta\}$ there is a possibility for agent 1
to benefit by misreporting their preferences. Agent 1 can report a
preference relation $R'_1$ such that $\beta P'_1 \alpha P'_1 \gamma P'_1 \delta$.
It is easy to verify that the allocation $\text{TTC}_1(R'_1,R_{-1})$ 
when agent 1 reports $R'_1$ is
$\{\alpha,\beta\}$.

In Figure~\ref{fig:lie} we see a more complicated example.
We adopt the convention throughout the paper that
\begin{tikzpicture} \draw[->] (0,0)--(1,0.1); \end{tikzpicture} denotes a first preference, \begin{tikzpicture} \draw[dotted,->] (0,0)--(1,0.1); \end{tikzpicture} denotes second preference, \begin{tikzpicture} \draw[dashed,->] (0,0)--(1,0.1); \end{tikzpicture} denotes third preference, \begin{tikzpicture} \draw[dash dot,->] (0,0)--(1,0.1); \end{tikzpicture} denotes fourth preference, and thereafter a dashed line with $i$ dots denotes the $(3+i)$th preference. 
We set $\omega_1=\{e_0,e_\alpha,e_\beta\}$.
The preferences of agent 1 are such that any bundle
including both $\alpha$ and $\beta$ is preferable to 
any bundle including one or the other or neither.
Informally, agent 1 wants to get $\alpha$ and $\beta$.
However, the order of preference of the individual goods,
according to $R_1$ is $\alpha,\beta,e_\alpha,e_\beta,e_0,\gamma,x,y$.
In the first round of TTC, the goods $\alpha,\gamma,e_0$ form a trading cycle. It can be seen that after these goods are removed,
first $x,y$ form a trading cycle, and then $\beta$ forms a trading cycle. Thus the assignment to agent 1 is $\{\alpha,e_\alpha,e_\beta\}$. Agent 1 has an incentive to lie; even though $x$ is not preferable to any individual good in $\omega_1$, obtaining it prevents
$x,y$ from forming a trading cycle. 

Since manipulation of TTC is clearly possible with multiple 
endowments, it is necessary to consider relaxing the
strict condition that a rule is strategy-proof.
Instead, we consider requiring that computing a beneficial
misreport is computationally intractable.

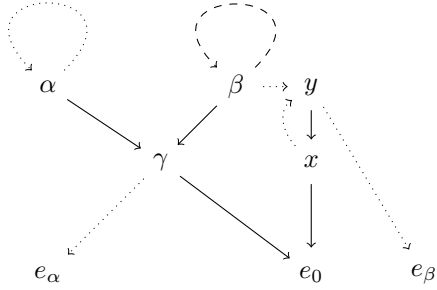
\begin{figure}[t]
\begin{center}
\begin{tikzpicture}
\tikzstyle{every node} = [circle, fill=white]

\node (eb) at (0,0) {$e_\alpha$};

\node (e0) at (3.5,0) {$e_0$};

\node (ea) at (5,0) {$e_\beta$};

\node (g) at (1.5,1.5) {$\gamma$};

\node (y) at (3.5,1.5)  {$x$};

\node (b) at (0,2.5) {$\alpha$};

\node (a) at (2.5,2.5) {$\beta$};

\node (x)  at (3.5,2.5) {$y$};

\foreach \from/\to in {b/g,a/g,x/y,g/e0,y/e0}
\draw [->] (\from) -- (\to);

\foreach \from/\to in {a/x,x/ea,g/eb}
\draw [dotted, ->] (\from) -- (\to);

\path (b) edge [loop above, dotted, ->,out=45,in=135, looseness= 10]  (b);

\path (a) edge [loop above, dashed, ->,out=45,in=135, looseness= 8]  (a);

\path (y) edge [ dotted, ->,out=135,in=215, ]  (x);

\end{tikzpicture}
\end{center}
\caption{Agent 1 (who owns $e_0,e_\alpha,e_\beta$) can get $\alpha$ and $\beta$ by trading $e_0$ for $x$.}
\label{fig:lie}
\end{figure}

\paragraph*{Computational complexity}
We are interested in the complexity of the following problem:\\

{\bf BENEFICIAL MISREPORT($\phi$)}

{\bf INPUT}: A generalised housing market economy $\cE$

{\bf QUESTION}: Does agent 1 have a beneficial misreport under $\phi$?\\

For simplicity's sake, we always assume that agent 1 is the would-be liar.
Since we are mainly interested in proving (conditional) lower bounds on
the complexity of manipulating TTC, we focus on the decision version of
the problem. Fujita {\em et al.} \cite{FujitaEA2015} showed that BM(TTC)
is $\NP$-complete in general (they refer to {\em Augmented} Top Trading Cycles,
but the description is equivalent). This result suggests that TTC might yet
be of practical use despite not being SP; an agent with limited computational
resources would have no incentive to lie. 

However, the hardness established by this result seems to depend
heavily on the size of the endowments. Indeed, the proof makes
the implicit assumption that one agent may have a number of goods
that grows with the number of agents.
This strongly suggests that
a parameterized approach is more appropriate. In fact, the $\NP$-completeness
of the problem could be very misleading; as we shall see later,
there is a polynomial time solution to the problem when the size
of the endowment is a fixed constant. 

\paragraph*{Parameterized complexity}

For a full treatment of the topic of parameterized
complexity, we refer the reader to the textbook by Downey
and Fellows \cite{DF1999}. We give a brief overview aimed at
non-specialists. Consider the following decision problems.\\
\begin{table}[h] \begin{tabular}{p{0.4\linewidth}
	p{0.4\linewidth}} {\bf CLIQUE}
	& {\bf VERTEX COVER} \\ {\bf INPUT}: A graph $G$ and
	an integer $K$
	& {\bf INPUT}: A graph $G$ and an integer $K$ \\
	{\bf QUESTION}: Is there a set $C$ of $K$ vertices
	of $G$ every pair of which is adjacent?	& {\bf
	QUESTION}: Is there a set $C$ of $K$ vertices of $G$
	such that every edge of $G$ contains a vertex of
	$C$? \\

\end{tabular}
\end{table}

Both of these problems are $\NP$-hard, which means that if
they can be solved in an amount of time that is polynomial
{\em in the total size of the input $(G,K)$} then $\Poly =
\NP$. On the other hand, when $K$ is fixed, and not part of
the input of the problem, both can be solved in polynomial
time. Indeed, CLIQUE can be solved in $|G|^{O(K)}$ time, and
VERTEX COVER can be solved in $O(2^K |G|)$ time. It should
be clear that there is a big difference in these run times:
$2^{20} \times 1000$ operations will take a modern computer
mere seconds whereas $1000^{20}$ is larger than the number
of atoms in the observable universe. This 2-dimensional
approach shows us that the complexity of these problems is
very sensitive to the size of the solution sought, and in
general a problem's complexity may depend heavily on the
size of some {\em parameter} in a way that classical
complexity ignores. 

We define a parameterized language to be a subset of
$\Sigma^* \times
\mathbb{N}$ for some alphabet $\Sigma$. If $(x,k)$ is a
member of a
parameterized language $L$ we say that $k$ is the {\em
parameter}. If there
exists an algorithm which can decide whether $(x,k)$ belongs
to $L$ in
$|x|^{f(k)}$ for some computable function $f$, then $L$ is
in the complexity
class $\XP$. If, additionally, there exists an algorithm
that decides
membership of $L$ in time $f(k)\cdot |x|$ for an arbitrary
function $f$, then
$L$ is in the complexity class $\FPT$. The above discussion
shows that VERTEX COVER
is in $\FPT$, but CLIQUE is thought not to be.

The fact that CLIQUE is $\NP$-hard is a {\em conditional lower 
bound} for the run time of an algorithm that solves CLIQUE.
Since every $\NP$ problem reduces to CLIQUE, a polynomial
time solution to this problem implies $\Poly=\NP$.
An analogous conditional lower bound exists in the parameterized
setting. We define the class of $\W[1]$ parameterized languages
to be those that reduce to the following.\\

{\bf SHORT NONDETERMINISTIC TURING MACHINE HALTING}

{\bf INPUT}: A nondeterministic Turing machine $M$

{\bf PARAMETER}: $k$

{\bf QUESTION}: Is it possible for $M$ to reach a halting state in at most $k$ steps?\\

It is considered extremely unlikely that $\FPT = \W[1]$.
We remark that CLIQUE happens to be $\W[1]$-hard (in 
fact $\W[1]$-complete) and thus unlikely to be in $\FPT$.
There is a whole hiearchy of classes $\FPT \subseteq \W[1] \subseteq \W[2] \subseteq \ldots \subseteq \XP$ (deciding the existence of a
dominating set of size $k$ is a $\W[2]$-complete problem for instance) and the inequality $\FPT \subset \XP$ is known to be strict. For our purposes, it is enough to consider $\W[1]$-hardness
as a conditional lower bound on the complexity of a decision problem.

Our main result is that BM(TTC), parameterized
by the size of the endowments, is $\W[1]$-hard.
For the rest of the paper, we refer only to 
the parameterized version of BM(TTC) 
In fact, BM(TTC) remains hard under a strong restriction
on the preference relations.

\paragraph*{Preference domains}

A preference relation $R_{i}$ is \textbf{lexicographic }if for each
$X,Y\subseteq\mathcal{O}$, $X\,P_{i}\,Y$ iff there is $b\in\mathcal{O}$
such that 1) $b\in X\backslash Y$, 2) for each $a\in\mathcal{O}$
with $a\,P_{i}\,b$, $a\in X\cap Y$. The lexicographic, additive,
responsive, and monotonic domains are ordered by inclusion \footnote{A preference relation $R_{i}$ is \textbf{monotonic }if
for each $X,Y\subseteq O$ such that $Y\subseteq X$, $X\,R_{i}\,Y$.
A preference relation $R_{i}$ is \textbf{responsive }if for each
$X\subset\mathcal{O}$, and each $a,b\in\mathcal{O}\backslash X$,
$X\cup\{a\}\,R_{i}\,X\cup\{b\}$ iff $a\,R_{i}\,b$. A preference
relation $R_{i}$ is \textbf{additive }if there is $u_{i}:\mathcal{O}\rightarrow\mathbb{R}$
such that for each $X,Y\subseteq\mathcal{O}$, $X\,R_{i}\,Y$ iff
$\sum_{a\in X}u_{i}(a)\geq\sum_{a\in Y}u_{i}(a)$.}. Our result holds even on the lexicographic domain; it immediately holds for
the more general domains.

For the rest of this paper, we assume that all agents
have lexicographic preference relations unless stated otherwise.
This allows us to write the preference relations in a compressed
format. We may write the preference relation of an agent $i$ as a
list of the singletons ordered by $R_i$, up to and including the least preferred element of $\omega_i$.

\section{The Main Result}


In order to prove that BM(TTC) is $\W[1]$-hard, we introduce
an auxiliary problem; our reduction is ultimately from the 
following problem.\\

{\bf MULTICOLOUR CLIQUE}

{\bf INPUT}: A graph $G$ with a proper vertex colouring $\phi$

{\bf PARAMETER}: The number of colours $k$

{\bf QUESTION}: Does there exists a clique of size $k$ 
in $G$?\\

In order to simplify our exposition, we introduce
an intermediate problem which we will prove is $\W[1]$-hard
and reduce to BM(TTC).
In a directed graph $G$ with a proper vertex colouring
$\phi:V(G) \rightarrow [k]$, an edge $(u,v)$ with
$\phi(v)=\phi(u)+1$ will be called a {\em rung}. On the
other hand if $\phi(v)<\phi(u)$, we say that $(u,v)$ is a
{\em snake}. A {\em partial ladder} in such a graph is a set of $k$
vertices $\{v_1,\ldots,v_j\}$ such that $(v_i,v_{i+1})$ is a
rung for all $1\leq i \leq j-1$. A (partial) ladder is {\em snakeless}
if there are no snakes between its vertices. We define the
problem of deciding the existence of a snakeless ladder in a
$k$-coloured graph as follows:\\

\begin{figure}[t]
\begin{center}
\begin{tikzpicture}[node distance=4cm]

\tikzstyle{every node} = [draw];

\node[circle,fill=black,text=white] (v1) at (0,2) {$v_1$};

\node[circle,fill=orange] (v2) at (2,2) {$v_2$};

\node[circle,fill=cyan] (v3) at (2,0) {$v_3$};

\node[circle,fill=yellow] (v4) at (0,0) {$v_4$};

\foreach \from/\to in {v1/v2,v2/v3,v3/v4}
\draw [->] (\from) -- (\to);

\foreach \from/\to in {v3/v1}
\draw [->,color=red] (\from) -- (\to);
\end{tikzpicture}
\caption{A directed $4$-coloured graph $G$ (snake in red).}
\label{fig:snl}
\end{center}
\end{figure}
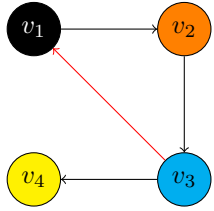

{\bf SNAKELESS LADDER}

{\bf INPUT}: A directed graph $G$ with a proper vertex colouring $\phi$

{\bf PARAMETER}: The number of colours $k$

{\bf QUESTION}: Does $G$ contain a snakeless ladder?\\

\begin{lemma}
SNAKELESS LADDER is $\W[1]$-hard.
\end{lemma}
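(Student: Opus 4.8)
The plan is to give a parameterized (FPT) reduction from MULTICOLOUR CLIQUE, which is a well-known $\W[1]$-hard problem. Given an instance $(G,\phi)$ with proper colouring $\phi:V(G)\to[k]$, I will construct in polynomial time a directed, properly $k$-coloured graph $G'$ on the same vertex set, carrying the same colouring and the same parameter $k$, such that $G'$ has a snakeless ladder if and only if $G$ has a clique of size $k$. Since the parameter is preserved exactly, this establishes $\W[1]$-hardness of SNAKELESS LADDER.

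The construction encodes the adjacency structure of $G$ into the two edge types. For every edge $\{u,v\}$ of $G$ whose endpoints receive consecutive colours, say $\phi(v)=\phi(u)+1$, I add the rung $(u,v)$ to $G'$. For every pair of differently coloured vertices $u,v$ with $\phi(u)<\phi(v)$ that is a \emph{non}-edge of $G$, I add the edge $(v,u)$; since the colour strictly decreases along it, this is a snake. Edges of $G$ between non-consecutive colours are simply discarded. Because every edge I add joins vertices of distinct colours, $\phi$ remains a proper colouring of $G'$, so $(G',\phi)$ is a legitimate instance, and the whole construction clearly runs in polynomial time.

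The heart of the argument is the equivalence. For the forward direction, a size-$k$ clique in $G$ meets each colour class exactly once (properness of $\phi$), giving vertices $v_1,\dots,v_k$ with $\phi(v_i)=i$; consecutive pairs are edges of $G$ and hence rungs of $G'$, so the $v_i$ form a ladder spanning all $k$ colours, and since every pair is adjacent in $G$ no snake was ever added between two of them, so the ladder is snakeless. For the converse, a snakeless ladder consists of one vertex $v_i$ of each colour $i$ joined by consecutive rungs; for any pair $v_i,v_j$ with $i<j$, were $\{v_i,v_j\}$ a non-edge of $G$ then by construction $(v_j,v_i)$ would be a snake between two ladder vertices, contradicting snakelessness. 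Hence all $\binom{k}{2}$ pairs are adjacent and $\{v_1,\dots,v_k\}$ is a clique.

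The main conceptual obstacle is that a ladder only forces adjacency between \emph{consecutive} colours (via its rungs), whereas a clique demands adjacency between all pairs. The device that overcomes this is to let every missing edge manifest as a snake: the snakeless requirement then performs the global work of forbidding all $\binom{k}{2}$ absent edges at once, not merely the consecutive ones. The remaining care is bookkeeping---ensuring $\phi$ stays proper by only ever joining differently coloured vertices, and checking that a $k$-vertex ladder must occupy every colour exactly once, so that the vertex-per-colour correspondence with a multicolour clique is exact.
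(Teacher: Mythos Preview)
Your reduction is correct and follows essentially the same approach as the paper: both reduce from MULTICOLOUR CLIQUE by keeping edges between consecutive colour classes as rungs and encoding \emph{non}-edges as snakes, so that the snakeless condition enforces the full $\binom{k}{2}$ adjacencies. The only cosmetic difference is that the paper complements \emph{only} between non-consecutive colour classes (leaving non-edges between consecutive classes absent), whereas you also turn consecutive-colour non-edges into snakes; this extra uniformity is harmless and, if anything, makes your ``no snake $\Leftrightarrow$ edge in $G$'' correspondence slightly cleaner.
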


\begin{proof} The proof is a simple reduction from
	MULTICOLOUR CLIQUE. From a $k$-coloured graph $G$ we
	obtain a directed graph $G'$ by first complementing
	the set of edges between non-adjacent colour
	classes. In other words if $|\phi(u)-\phi(v)| \not =
	1$, then $uv$ is an edge in $G'$ if and only if it
	was not an edge in $G$. We then direct the edge $uv$
	from $u$ to $v$ if $\phi(v)=\phi(u)+1$ or if
	$\phi(v)<\phi(u)$. Now consider a set of vertices
	$\{v_1,\ldots,v_k\}$ in $G'$. By definition
	$(v_1,v_2),(v_2,v_3)\ldots,(v_{k+1},v_k)$ are rungs
	in $G'$ if and only if $v_1 v_2,v_2 v_3,\ldots,
	v_{k+1} v_k$ are edges in $G$. Similarly, for $i<j$
	there is no snake $(v_j,v_i) $ in $G'$ if and only
	if $v_i v_j$ is an edge of $G$. Therefore,
	$\{v_1,\ldots,v_k\}$ form a snakeless ladder in $G'$
	if and only if they form a clique in $G$.

\end{proof}

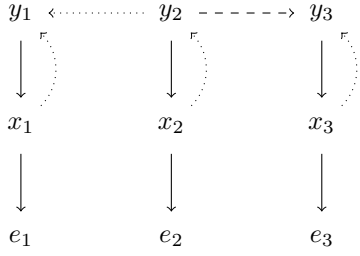
\begin{figure}[t]
\begin{center}
\begin{tikzpicture}
\tikzstyle{every node} = [circle, fill=white]

\node (e1) at (3,0) {$e_1$};

\node (e2) at (5,0) {$e_2$};

\node (e3) at (7,0) {$e_3$};

\node (x11) at (3,1.5) {$x_1$};

\node (x22) at (5,1.5) {$x_2$};

\node (x33) at (7,1.5) {$x_3$};

\node (y11) at (3,3) {$y_1$};

\node (y22) at (5,3) {$y_2$};

\node (y33) at (7,3) {$y_3$};

\foreach \from/\to in {y11/x11,y22/x22,y33/x33,x11/e1,x22/e2,x33/e3}
\draw [->] (\from) -- (\to);

\foreach \from/\to in {y22/y11}
\draw [dotted, ->] (\from) -- (\to);

\draw [dashed, ->] (y22) -- (y33);

\foreach \from/\to in {x11/y11,x22/y22,x33/y33}
\draw [dotted, ->] (\from) to[out=45, in=-45] (\to);

\end{tikzpicture}
\caption{Three vertex gadgets.}
\label{fig:gadget}
\end{center}
\end{figure}

\begin{theorem}
BM(TTC) is $\W[1]$-hard, even when the preference domain is lexicographic
\end{theorem}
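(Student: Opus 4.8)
The plan is to reduce from \textsc{Snakeless Ladder}, which the previous lemma established is $\W[1]$-hard. Given a directed $k$-coloured graph $G$, I would construct a generalised housing market economy $\cE$ in which agent $1$ (the would-be manipulator) has a beneficial misreport under TTC if and only if $G$ contains a snakeless ladder. The size of the endowments should be bounded by a function of $k$ only, so that the parameter is preserved. The central design idea, motivated by the examples in Figures~\ref{fig:ttc} and \ref{fig:lie}, is that manipulation in TTC works by strategically acquiring a good whose sole purpose is to \emph{delay or destroy a trading cycle} that would otherwise form among other goods. Here the ``other goods'' will encode the structure of $G$: I expect to build one \emph{vertex gadget} per vertex of $G$ (Figure~\ref{fig:gadget} shows three such gadgets, each a little chain $y_v \to x_v \to e_v$ with second-preference back-edges), and to route the rung/snake edges of $G$ through the second preferences of these gadgets.

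The key steps, in order, are as follows. First I would fix the endowment structure: agent $1$ owns a constant (in $k$) number of special goods, including a ``target'' good $\alpha$ that agent $1$ can only obtain by reporting a lie, and perhaps a bundle whose acquisition is lexicographically best exactly when a full snakeless ladder of length $k$ is realised. Second, for each vertex $v$ of $G$ I would introduce a gadget owned by an auxiliary agent, wired so that the gadget's goods form a self-contained trading cycle \emph{under truthful play} but can be diverted if agent $1$ injects itself at the right moment. Third, I would encode a rung $(u,v)$ (an edge with $\phi(v)=\phi(u)+1$) as a preference link that allows the gadget for $u$ to ``hand off'' to the gadget for $v$, so that a chain of rungs corresponds to a consecutive sequence of colour classes $1,2,\dots,k$; this is where the \emph{ladder} is assembled. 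Fourth, and most delicately, I would encode snakes (back-edges $(v,u)$ with $\phi(v)<\phi(u)$) as preference links that cause a premature cycle to close early, using the trading-time machinery of Observations~\ref{obs:tt}--\ref{obs:tt4}: a snake between two chosen vertices should force some good to trade at a time that ruins agent $1$'s plan, so that a successful manipulation requires the absence of snakes, i.e. \emph{snakelessness}. Finally I would verify both directions of the equivalence by tracing the run of TTC: if $G$ has a snakeless ladder $v_1,\dots,v_k$, agent $1$ reports a misreport that ``claims'' the corresponding gadget goods in the correct order, the ladder cycles link up without interference, and agent $1$ ends up with its most-preferred bundle; conversely, any beneficial misreport must, by the trading-time constraints, pick out exactly one vertex per colour with all rungs present and no snakes.

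The main obstacle I anticipate is the snake-encoding and the associated \emph{timing} argument. It is easy to make rungs compose into a ladder, but ensuring that the \emph{presence of even a single snake} provably sabotages the manipulation — while the absence of all snakes provably permits it — requires carefully tuning the second- and lower-order preferences so that the induced trading times satisfy the strict inequalities of Observations~\ref{obs:tt2}, \ref{obs:tt3}, and \ref{obs:tt4}. Concretely, I expect to need a lemma computing $tt(\cdot)$ for every gadget good in each of the two cases (truthful profile versus agent $1$'s misreport), showing that a snake $(v_j,v_i)$ creates a short trading walk whose embedded cycle (by Observation~\ref{obs:tt3}) lowers $tt$ of a good agent $1$ was relying on, thereby blocking the target acquisition. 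A secondary technical point is confirming that the preferences I write down are genuinely \emph{lexicographic} (so that the theorem's stronger domain claim holds), which means I must specify each agent's full linear order on singletons up to the least-preferred element of its endowment and check that the bundle comparisons agent $1$ cares about — in particular ``any bundle containing all of $v_1,\dots,v_k$'s tokens beats any bundle missing one'' — follow from the lexicographic rule rather than from additive or responsive reasoning.
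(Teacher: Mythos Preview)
Your plan is the paper's approach: reduce from \textsc{Snakeless Ladder}, build a vertex gadget $(x_i,y_i)$ for each vertex, route rungs and snakes through the lower preferences of the $y_i$'s, and argue correctness via the trading-time Observations~\ref{obs:tt}--\ref{obs:tt4}. Two places where your sketch drifts from the actual construction are worth correcting before you write it out.

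First, the goods $e_j$ at the bottom of the gadget chains are indexed by \emph{colour}, not by vertex: every $x_i$ with $\phi(v_i)=j$ topranks the same $e_j\in\omega_1$. This is what keeps $|\omega_1|=k+2$ (namely $\{e_\alpha,e_\beta,e_1,\ldots,e_k\}$); your notation $e_v$ suggests one per vertex, which would blow up the parameter.

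Second, the success criterion is not ``agent~$1$ collects the vertex tokens''. Agent~$1$'s true preference is $(\alpha,\beta,e_\alpha,e_\beta,e_k,\ldots,e_1)$, and under truth agent~$1$ \emph{already} obtains $\alpha$ (via the cycle $\alpha,\gamma,e_1$) but not $\beta$; Claim~\ref{cla:both} shows any beneficial misreport must secure both $\alpha$ and $\beta$. The misreport $R'_1=(x_{v_1},\ldots,x_{v_k},\alpha,\beta,\ldots)$ trades each $e_j$ for the ladder's $x_{v_j}$ not because agent~$1$ values $x_{v_j}$, but to break the $2$-cycle $x_{v_j},y_{v_j}$ and free $y_{v_j}$. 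If the $v_j$ form a snakeless ladder, the freed $y$'s chain up as $\beta\to y_{v_1}\to\cdots\to y_{v_k}\to e_\beta$ and close a cycle delivering $\beta$ to agent~$1$; a snake $(v_j,v_i)$ makes $y_{v_j}$ point back to $y_{v_i}$, so the $y$'s cycle among themselves and $\beta$ self-loops. The lexicographic check you flag is then simply that any bundle containing both $\alpha$ and $\beta$ beats the truthful bundle $\{\alpha,e_\alpha,e_\beta,e_k,\ldots,e_2\}$, which is immediate from the singleton order.
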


\begin{proof}

	We reduce from SNAKELESS LADDER. Let $G$ be a
	graph with a proper $k$-colouring for some integer
	$k$. We will obtain an economy $\cE_G$ and an
	integer $k'$ such that the endowment of player 1 in
	$\cE_G$ has size $k'$ and furthermore player 1 has a
	beneficial misreport $R_1'$ if and only if there is
	a snakeless ladder in $G$. We will then
	show that $\cE_G,k'$ can be constructed in time
	$f(k)|G|^c$ for some function $f$ and constant $c$.

We begin the construction of $\cE_G=(N,\cO,\omega,R)$ by
setting $\omega_1={e_\alpha,e_\beta}\cup \{e_j:1 \leq
	j\leq k\}$. We can assume that all other agents
	have a singleton endowment. For each vertex $v_i$ in $G$,
	we add to $\cE_G$ a {\em vertex gadget}. A vertex
	gadget is a pair of goods $x_i,y_i$ and their respective owners,
	who have particular preferences depending on the 
	colour and neighbourhood of $v_i$. The agent $a(y_i)$ always
	topranks $x_i$, and $a(x_i)$ topranks $e_j$ where $j$ is the
	colour of $v_i$. The full preference relation of $a(x_i)$ 
	is $(e_j,y_i,x_i)$. The preferences of $a(y_i)$ are as follows.
	Let $v_{s_1},v_{s_2},\ldots$ be the
	endpoints of the snakes which start at $v_i$ and let $v_{r_1},v_{r_2},\ldots$ be the endpoints of the rungs
	which start at $v_i$.
	For a vertex $v_i$ of colour $k$, the full preference relation
	of $a(y_i)$ is $(x_i,y_{s_1},y_{s_2},\ldots,e_\beta,y_i)$.
	For a vertex $v_i$ not of colour $k$, the full preference relation of $a(y_i)$ is $(x_i,y_{s_1},y_{s_2},\ldots,y_{r_1},y_{r_2},\ldots,y_i)$.
	Note that $a(y_i)$ prefers each of the goods representing
	the snakes of $v_i$ to each of the goods representing the rungs.
	In Figure~\ref{fig:gadget} we see three vertex gadgets.
	For $i=1,2,3$ we have that $v_i$ has colour $i$ and we see
	that $(v_2,v_1)$ is a snake and $(v_2,v_3)$ is a rung.
	
	We continue our construction by the addition of three goods
	(and their respective owners) $\alpha,\beta,\gamma$.
	The full preference relation of $a(\alpha)$ is $(\gamma,\alpha)$.
	The full preference relation of $a(\gamma)$ is $(e_1,e_2,\ldots,e_k,e_\alpha,\gamma)$. 
	The preference relation of $a(\beta)$
	is as follows.
	Suppose $v_1,v_2,\ldots,v_j$ are the vertices of colour 1
	in $G$ in an arbitrary order. The full preference relation
	of $a(\beta)$ is $(\gamma,y_1,y_2,\ldots,y_j,\beta)$. 
	Note that $a(\beta)$ prefers all goods representing
	vertices of colour 1 to $\beta$ and that $\beta$ is preferred to 
	all other goods representing vertices. 
	
	The construction is completed by revealing the true
	preference relation of agent 1; namely, $(\alpha,\beta,e_\alpha,e_\beta,e_k,e_{k-1},\ldots,e_1)$. 
	For clarity, we have provided Table~\ref{tab:pref} which
	shows the preferences of the agents described above. 
	In Figure~\ref{fig:econ} we see an example of an economy
	constructed from the graph $G$ in Figure~\ref{fig:snl}.
	The preferences of agent 1 are omitted.
	When agent 1 reports the truth, the assignment
	received is $\{\alpha,e_\alpha,e_\beta,e_k,\ldots,e_2\}$.
	
	\begin{claim}\label{cla:both}
		If there is a beneficial misreport $R'_1$ for agent 1,
	then $\alpha,\beta\in\text{TTC}_1(R'_1,R_{-1})$
	\end{claim}
	
	\begin{proof}
		The two best goods obtained by agent 1 by reporting the truth
		are $\alpha$ and $e_\alpha$. By the lexicographic 
		property of $R_1$, any bundle preferred by agent 1 to its true
		assignment must include $\alpha$ since agent 1 topranks $\alpha$.
		Similarly, a preferred bundle must include a good that is preferred by agent 1 to $e_\alpha$.
		The only such good is $\beta$.
	\end{proof}
		
	The only bundles agent 1 prefers to their true assignment
	include both $\alpha$ and $\beta$.
	The reader may wish to verify that there is no beneficial
	misreport available to agent 1 in the economy in Figure~\ref{fig:snl}.
	This is in contrast with Figure~\ref{fig:lie}, where
	agent 1 was able to prevent $x,y$ forming a trading
	cycle by obtaining $x$, and therefore obtain both $\alpha$
	and $\beta$.
	If agent 1 tries the misreport $(x_1,x_2,x_3,x_4,\alpha,\beta)$
	it is easy to see that $y_1,y_2,y_3$ will at some point
	form a trading cycle. Thus $\beta$ will form a trading cycle
	on its own and not be included in the assignment to
	agent 1. 
	This is because $v_1,v_2,v_3,v_4$ is not a snakeless ladder
	in $G$. 
	If the snake $(v_3,v_1)$ was omitted from $G$,
	then $G$ would have a snakeless ladder 
	and $y_3$ would no longer prefer $y_1$ to $y_4$.

\begin{table}[t]
\begin{center}
\begin{tabular}{l  l  l  l  l  l }
1	  & $\alpha$ & $\beta$  & $\gamma$   & $x_{i}$ & $y_{i}$  \\

\hline

$\alpha$  & $\gamma$ & $\gamma$	& $e_1$      & $e_j$	& $x_{i}$  \\

$\beta$	  & $\alpha$ & \{$y_i$ of colour 1\} & $e_2$      & $y_{i}$ & \{snakes\} \\

$e_\alpha$	  &  	     & $\beta$	& $\cdots$   & $x_{i}$	& \{rungs\} \\

$e_\beta$  &          &    	& $e_k$	     & &	($e_\beta$ if $j=k$) \\

$e_k$	  &          &		& $e_\alpha$ &			&	$y_{i}$		 \\

$\cdots$&	     &		& $\gamma$   &		&		\\

$e_1$ &	     &		&	     &	&			

\end{tabular}
\caption{The preferences of some agents in $\cE_G$ ($v_i$ has colour $j$)}
\label{tab:pref}
\end{center}
\end{table}

	We now formalise this intuition and argue that
	agent 1 has a beneficial misreport in $\cE_G$ if
	and only if $G$ has a snakeless ladder.
	Suppose that $L=(v_1,v_2,\ldots,v_k)$ is a
	snakeless ladder in $G$ (so $v_i$ has colour $i$ in $G$).
	We claim that $R'_1=(x_1,x_2,\ldots,x_k,\alpha,\beta,e_1,\ldots,e_k,e_\alpha,e_\beta)$ is a beneficial misreport.
	We abuse our terminology slightly and say that if $v_i$ is
	of colour $j$, then $x_i,y_i$ are {\em goods of colour $j$}.
	Let $\cE'_G$ be the economy obtained from $\cE_G$
	by replacing $R_1$ by $R'_1$, and consider a 
	run of TTC on $\cE'_G$. For the rest of this proof, we will write $tt(\delta)=tt_{\cE'}(\delta)$ for the trade time of a good
$\delta$ during a run of TTC on $\cE'$. It is clear that
	in $H_1$, there is a trading cycle $e_1,x_1$.
	After this is removed, there will be a trading
	cycle $e_2,x_2$. Observe that of all the
	vertices of colour 1 and 2 in $G$,
	only $v_1$ and $v_2$ are represented 
	by goods in $H_{tt(e_2)+1}$. Furthermore,
	$a(y_1)$ topranks $y_2$ in $V_{tt(e_2)+1}$.
	Now $e_3,x_3$ form a trading cycle, and
	after this is removed, only $y_3$ remains 
	among the goods of colour 3.
	Since $L$ is a snakeless ladder, $a(y_3)$
	prefers each good of colour 4
	to $y_1$. 
	Similarly, we have that $a(y_i)$ topranks $a(y_{i+1})$
	in $H_{tt(e_k)+1}$ for $1\leq i < k$. 
	Since $v_k$ is of colour $k$, $y_k$ topranks $e_\beta$.
	The goods $x_i, 1\leq i \leq k$ are not 
	in $V_{tt(e_k)+1}$, and so agent 1 topranks $\alpha$
	in $H_{tt(e_k)+1}$ (according to the false
	preference relation $R'_1$), and $a(\gamma)$ topranks
	$e_\alpha$.
	Thus, $e_\alpha,\alpha,\gamma$ form a trading cycle.
	The assignment of agent 1 under $R'_1$ includes $\alpha$ as required.
	Since $\gamma$ is not in $V_{tt(\alpha)+1}$, so $a(\beta)$
	topranks $y_1$, the only remaining good of colour $1$. Furthermore, $\alpha$ is
	not in $V_{tt(\alpha)+1}$, so agent 1 topranks $\beta$
	in $H_{tt(\alpha)+1}$.
	Finally, $e_\beta,\beta,y_1,\ldots,y_k$ form a trading
	cycle, and $\beta$ is included in the assignment to agent 1.
	
	Suppose instead that there exists a beneficial misreport $R'_1$ for agent 1.
	We show that the trading cycle in $H_{tt(\beta)}$
	that contains $\beta$ is of the form $(\beta,y_1,y_2,\ldots,y_k,e_\beta)$ where $v_1,v_2,\ldots,v_k$ is a snakeless
	ladder in $G$.
	We now demonstrate that $\alpha P'_1 \beta$ must hold.
	By Observation~\ref{obs:tt}, we have that $tt(\gamma) \leq tt(\alpha)$.  On the other hand, if $tt(\gamma) < tt(\alpha)$, 
then $a(\alpha)$ topranks $\alpha$ in $V_{tt(\gamma)+1}$; thus $a(\alpha)$ keeps $\alpha$, a contradiction. This shows that $tt(\alpha)=tt(\gamma)$. Observation~\ref{obs:tt} also gives
us that $tt(\gamma) \leq tt(\beta)$, and by Observation~\ref{obs:tt2} this inequality is strict. This shows that $tt(\alpha)<tt(\beta)$.
Since $\alpha$ and $\beta$ are both in the assignment to agent 1 by
assumption, we must have $\alpha P'_1 \beta$. 

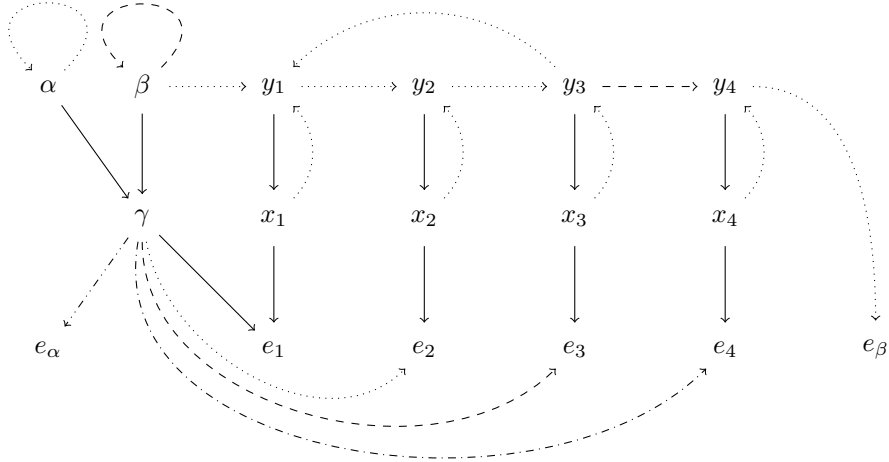
\begin{figure}[t]
\begin{center}
\begin{tikzpicture}
\tikzstyle{every node} = [circle, fill=white]
\def\h{1.75}
\node (ea) at (0,0) {$e_\alpha$};

\node (a) at (0,2*\h) {$\alpha$};

\node (b) at (1.25,2*\h) {$\beta$};

\node (g) at (1.25,\h) {$\gamma$};

\node (e1) at (3,0) {$e_1$};

\node (e2) at (5,0) {$e_2$};

\node (e3) at (7,0) {$e_3$};

\node (e4) at (9,0) {$e_4$};

\node (eb) at (11,0) {$e_\beta$};

\node (x11) at (3,\h) {$x_{1}$};

\node (x22) at (5,\h) {$x_{2}$};

\node (x33) at (7,\h) {$x_{3}$};

\node (x44) at (9,\h) {$x_{4}$};

\node (y11) at (3,2*\h) {$y_{1}$};

\node (y22) at (5,2*\h) {$y_{2}$};

\node (y33) at (7,2*\h) {$y_{3}$};

\node (y44) at (9,2*\h) {$y_{4}$};

\foreach \from/\to in {b/g,a/g,g/e1,y11/x11,y22/x22,y33/x33,y44/x44,x11/e1,x22/e2,x33/e3,x44/e4}
\draw [->] (\from) -- (\to);

\foreach \from/\to in {b/y11,y11/y22,y22/y33}
\draw [dotted, ->] (\from) -- (\to);

\foreach \from/\to in {x11/y11,x22/y22,x33/y33,x44/y44}
\draw [dotted, ->] (\from) to[out=45, in=-45] (\to);

\draw [dotted, ->] (y33) to[out=135, in=45] (y11);

\draw [dashed, ->] (y33) -- (y44);


\draw [dotted, ->] (y44) to[out=0, in=90] (eb);

\draw [dotted, ->] (g) to[out=-80, in=-135] (e2);

\draw [dashed, ->] (g) to[out=-90, in=-135] (e3);

\draw [dash dot, ->] (g) to[out=-100, in=-135] (e4);

\draw [dash dot dot, ->] (g) -- (ea);

\path (a) edge [loop above, dotted, ->,out=45,in=135, looseness= 10]  (a);

\path (b) edge [loop above, dashed, ->,out=45,in=135, looseness= 8]  (b);

\end{tikzpicture}
\caption{The economy $\cE_G$ associated with the graph in Figure~\ref{fig:snl}}
\label{fig:econ}
\end{center}
\end{figure}

Observation~\ref{obs:tt} also tells us that $tt(e_1) \leq tt(\gamma)$. We show that this inequality is strict
If $tt(e_1)=tt(\gamma)$, then $e_1,\alpha,\gamma$ form a 
trading cycle in $H_{tt(e_1)}$. Then in $H_{tt(e_1)+1}$, each
pair of colour $1$ forms a trading cycle. 
By Observation~\ref{obs:tt3}, no pair of colour $1$ is in
$H_{tt(\beta)}$, and $\beta$ forms a trading cycle with itself,
contradicting the definition of $R'_1$.

In $H_{tt(e_1)+1}$, the agent $a(\gamma)$ topranks $e_2$.
Again, by Observation~\ref{obs:tt} we have that $tt(e_2) \leq tt(\gamma)$. A very similar argument to the above shows
that this inequality is strict. Indeed,
since $\beta$ does not form a trading cycle on its own by
assumption, there must be at least one good $y_i$ of colour 1
in $H_{tt(\beta)}$. Since there must be a trading cycle
including $\beta$ in $H_{tt(\beta)}$, the good $x_i$ cannot
be in $H_{tt(\beta)}$. If $tt(e_2)=tt(\gamma)$, every
pair of colour 2 forms a trading cycle in $H_{tt(e_2)+1}$.
Thus no pair of colour 2 is in $H_{tt(\beta)}$, so $y_i$
forms a trading cycle with itself a contradiction.

Proceeding by induction, we see that $tt(e_i)<tt(\alpha)$ for
$1\leq i \leq k$. Consider $H_{tt(\alpha)}$. For $1 \leq i \leq k$,
the goods $e_i$ are not in $V_{tt(\alpha)}$. The only other good
that $a(\gamma)$ ranks above $\gamma$ is $e_\alpha$. 
Thus the trading cycle in $H_{tt(\alpha)}$ containing 
$\alpha$ must be $\alpha,\gamma,e_\alpha$.

Now consider $H_{tt(\beta)}$. Suppose $y_i$ is a good of
colour $j$ in $V_{tt(\beta)}$. Then, without loss of generality,
$x_i$ is not in $V_{tt(\beta)}$, since we have shown $e_j$ is not, and the
order in which cycles are removed is arbitrary.
Moreover, suppose there are distinct goods $y_i,y_{i'}$ of colour $j$ in $V_{tt(\beta)}$.
By Observation~\ref{obs:tt}, $tt(e_j) \leq tt(x_i)$ and $tt(e_j) \leq tt(x_{i'})$. By Observation~\ref{obs:tt2}, $tt(x_i) \not= tt(x_{i'})$. Thus $e_j$ must have been in a trading cycle with at most one of $x_i,x_{i'}$. Without loss of generality, $tt(e_j)$ is strictly less than $tt(x_i)$, and $y_i,x_i$ form a trading cycle in $H_{tt(e_j)+1}$, a contradiction.

The trading walk
in $H_{tt(\beta)}$ starting at $\beta$ must be a trading cycle.
Since $\gamma$ is not in 
$V_{tt(\beta)}$, there must be exactly one good $y_i$ of colour $1$
in $V_{tt(\beta)}$. Without loss of generality, that good is 
$y_1$. 
The agent $a(y_1)$ only ranks goods of the form $y_j$ of colour
2 above $y_1$ in $V_{tt(\beta)}$. As we have discussed, there must be
exactly one such good; without loss of generality,
that good is $y_2$. We proceed by induction.
Suppose there is a path $P$ in $H_{tt(\beta)}$ of the form $(\beta,y_1,\ldots,y_i)$ where $y_i$ is of colour $i$. 
Observe that $v_1,\ldots,v_{i-1}$ must be a snakeless partial ladder,
though there may yet be a snake $(v_i,v_{i'})$ with $i'<i$.
However, if $a(y_i)$ topranks some
good $y_{i'} \in P$ with $i'<i$ then the trading walk starting at $\beta$ is not a cycle, which is a contradiction. All other goods
of colour less than $i$ are omitted from $V_{tt(\beta)}$, as is $x_i$. Thus $a(y_i)$ topranks a good of colour $i+1$ in $V_{tt(\beta)}$. Without loss of generality, that good is $y_{i+1}$. Observe that $v_1,ldots,v_i$ is a snakeless partial ladder, and that there is a path in $H_{tt(\beta)}$ of the form $(\beta,y_1,\ldots,y_{i+1})$.
We conclude that there is a path of the form $(\beta,y_1,\ldots,y_k)$ 
in $H_{tt(\beta)}$, and by the same argument, $y_k$ must toprank $e_\beta$. In other words, we have that $v_1,\ldots,v_k$ is a snakeless ladder as required.

\end{proof}

\section{An Upper Bound}

We leave the possibility of a matching upper bound on the complexity
of BM(TTC) (i.e. a proof of membership in $\W[1]$) as an interesting
open problem. We conclude the paper with an upper bound that
nevertheless represents a negative result for TTC. Informally,
if the size of the endowments is a fixed constant, BM(TTC) can be decided in polynomial time. In fact, our result is slightly stronger,
in that we present an explicit constructive algorithm that
can produce a beneficial misreport. We also highlight that this
result holds regardless of the preference domain.

\begin{proposition}
BM(TTC) is in $\XP$.
\end{proposition}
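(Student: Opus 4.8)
The plan is to show that when the endowment size $k'=|\omega_1|$ is a fixed constant, agent~1 can exhaustively search over a polynomially-bounded collection of candidate misreports, running TTC on each to check for a beneficial outcome. The key observation is that the only feature of a misreport $R'_1$ that matters for the run of TTC is the \emph{relative order} in which agent~1's reported top-choices are exhausted as goods are removed from the graph; since agent~1 owns only $k'$ goods and can consume at most $k'$ goods, what matters is essentially which goods agent~1 ``points at'' and in what order. So the first step is to make this precise: I would argue that it suffices to consider misreports in which agent~1 ranks some ordered sequence of at most $k'$ goods of $\mathcal{O}$ first (one active ``target'' for each of the up to $k'$ rounds in which agent~1's owned goods get traded), followed by an arbitrary fixed completion. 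The number of such ordered sequences is at most $\sum_{t=0}^{k'} |\mathcal{O}|^t = O(|\mathcal{O}|^{k'})$, which is polynomial in the input size for fixed $k'$.

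The core technical step is a \textbf{canonicalisation lemma}: I would show that for the purpose of computing $\mathrm{TTC}_1$, every misreport $R'_1$ is equivalent to one of the $O(|\mathcal{O}|^{k'})$ canonical misreports described above. The intuition is that agent~1's endowment consists of $k'$ goods, and in each round of TTC at most the goods currently owned by agent~1 are ``pointing'' somewhere; the trading cycles that form, and hence the bundle agent~1 receives, depend only on which good each currently-held good of agent~1 topranks \emph{in the surviving vertex set} $V_t$. Since agent~1 holds at most $k'$ goods at any time and each can be traded at most once, the behaviour of agent~1 under TTC is determined by a list of at most $k'$ ``pointing targets'' read off in order of trading time. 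Formally I would define an equivalence on misreports by: $R'_1 \sim R''_1$ if they induce the same sequence of top-ranked surviving goods at each relevant round, and show that equivalent misreports yield identical assignments $\mathrm{TTC}_1$. Each equivalence class contains a canonical representative of the form above, and there are only polynomially many classes.

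Given the canonicalisation, the \textbf{algorithm} is then straightforward: enumerate all $O(|\mathcal{O}|^{k'})$ canonical misreports; for each, construct the economy $(R'_1, R_{-1})$ and run TTC (which is polynomial-time) to obtain $\mathrm{TTC}_1(R'_1, R_{-1})$; and test whether this bundle satisfies $\mathrm{TTC}_1(R'_1, R_{-1})\, P_1\, \mathrm{TTC}_1(R)$. If any candidate is beneficial, output it (and answer \emph{yes}); otherwise answer \emph{no}. The total running time is $|\mathcal{O}|^{k'} \cdot \mathrm{poly}(|\mathcal{E}|) = |\mathcal{E}|^{O(k')}$, which is $|x|^{f(k')}$ with $f(k')=O(k')$, placing the problem in $\XP$. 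Since the algorithm actually produces a witnessing misreport, this also gives the stronger constructive claim promised in the preamble, and the argument never invokes any property of the preference domain, so it holds in full generality.

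The step I expect to be the \textbf{main obstacle} is establishing the canonicalisation lemma rigorously --- in particular, pinning down precisely what ``pointing targets read off in trade-time order'' means and proving that two misreports agreeing on this data produce the same assignment for agent~1. The subtlety is that the surviving set $V_t$, and hence which good is topranked by agent~1's currently-held goods, depends on the whole history of removed cycles, which in turn depends on $R'_1$; one must verify by induction on the rounds that the cycle structure seen by agent~1 is invariant across equivalent misreports. Using Observations~\ref{obs:tt}--\ref{obs:tt4} to control the trade times and the fact that the order of cycle removal is immaterial, I would push this induction through to bound the number of genuinely distinct behaviours by $O(|\mathcal{O}|^{k'})$, which is what makes the enumeration polynomial.
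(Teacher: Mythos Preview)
Your proposal is correct and follows essentially the same approach as the paper. Both arguments enumerate $O(n^{k})$ canonical misreports (ordered $k$-tuples of target goods placed at the top of agent~1's report), run TTC on each, and check for benefit; both rest on a canonicalisation claim asserting that only the ordered list of goods agent~1 actually acquires matters. The paper phrases this claim slightly more concretely---if agent~1 obtains $\{\gamma_1,\ldots,\gamma_k\}$ under $R'_1$ (indexed by $R'_1$-order), then the misreport ranking $\gamma_1,\ldots,\gamma_k$ first also yields this bundle---which makes the induction on trade times a bit cleaner than your equivalence-class formulation, but the content is the same.
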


\begin{proof}
We will show that the following algorithm computes a beneficial misreport, if one exists, for agent 1 in time at most $k!n^{k+c}$ where $n$ is the number of goods, $k$ is the size of the endowment and $c$ is a constant associated with the runtime of TTC.\\

\noindent{\bf Algorithm $\cA$}

\noindent{\bf Input:} An economy $\mathcal{E}=(N,\mathcal{O},\omega,R)$.

\noindent{\bf Output:} An beneficial misreport for agent 1.

\begin{enumerate}

	\item Let $\omega_1=\{e_1,\ldots,e_k\}$ be the endowment of agent 1.
	
	\item Let $X_1,X_2,\ldots,X_m$ be the bundles of size $k$ such that
	 $X_i R_1 \omega_1$ for each $i$ (ordered according to $R_1$).
	 
	 \item For $i=1,2,\ldots,m$:
	 
	 	\begin{enumerate}
	 		
	 		\item Let $X_i=\{\gamma_1,\ldots,\gamma_k\}$
	 		
	 		\item Let $\alpha_1,\ldots,\alpha_{n-k}$ be the goods not in $X_i$.
	 		
	 		\item For each permutation $\{i_1,\ldots,i_k\}$ of $\{1,\ldots,k\}$:
	 		
	 			\begin{enumerate}
	 			
	 				\item Let $R'_1$ induce the following ordering over the singletons: $(\gamma_{i_1},\ldots,\gamma_{i_k},$ $\alpha_1,\ldots,\alpha_{n-k})$ (the order of $\alpha_1,\ldots,\alpha_{n-k}$ is arbitrary).

	 				\item\label{step:try} Let $z$ be the output of TTC($R'_1,R_{-1}$)
	 				
	 				\item If $z_1 \in z$ is the bundle $X_i$, then return $R'_1$
	 			
	 			\end{enumerate}
	 			
	 	\end{enumerate}
	 	
	 \item Return 0

\end{enumerate}

The correctness of our algorithm is a corollary of the following claim.

\begin{claim}
Suppose the assignment to agent 1 under TTC is $z_1=\{\gamma_1,\ldots,\gamma_k\}$ if they report $R'_1$, with $\gamma_i R'_1 \gamma_{i+1}$ for $1 \leq i < k$. Let $R''_1$ be a misreport
such that $\gamma_i R''_1 \gamma_{i+1}$ for $1 \leq i < k$ and 
$\gamma_k R''_1 \alpha$ for every good $\alpha \not \in z_1$.
Then the assignment to agent 1 if they report $R''_1$ is also $z_1$
\end{claim}

\begin{proof}
Let $\cE'$ and $\cE''$ be the (otherwise identical) economies in which agent 1 
reports $R'_1$ and $R''_1$ respectively. 
For $t=1,2,\ldots$ let $H'_t$ and $H''_t$ be the graphs
generated by running TTC on $\cE'$ and $\cE''$ respectively.
Let $tt'(\alpha)$ and $tt''(\alpha)$ be the trade times of
$\alpha$ during a run of TTC on $\cE'$ and $\cE''$ respectively.
By Observation~\ref{obs:tt4}, $tt'(\gamma_1)<tt'(\gamma_i)$ and $tt''(\gamma_1) < tt''(\gamma_i)$ for $1< i \leq k$.
Consider $H'_{tt'(\gamma_1)}$. Clearly, none of the cycles that
have been removed from $H'_1,H'_2,\ldots,H'_{tt'(\gamma_1)-1}$ have
included any of $\gamma_1,\ldots,\gamma_k$. Therefore we may assume
without loss of generality that the same cycles are removed
from $H''_1,\ldots,H''_{tt''(\gamma_1)-1}$ and thus $H''_{tt''(\gamma_1)}$ and $H'_{tt'(\gamma_1)}$ are identical.
Since $\gamma_1$ is assigned to agent 1 in $\cE'$, it must
also be assigned to agent 1 in $\cE''$. The claim follows by induction.

\end{proof}

Thus if there is a beneficial misreport such that the assignment
to agent 1 is $X=\{\gamma_1,\ldots,\gamma_k\}$, then it is enough to check only those misreports that rank the goods in $X$ above any other goods.

It remains for us to analyse the runtime of Algorithm~$\cA$. There are
at most $\binom{n}{k}\leq n^k$ bundles that agent 1 can prefer above the endowment. There are $k!$ different permutations of 
a bundle of size $k$. So Step~\ref{step:try} is performed at most
$k!n^k$ times. In this step, TTC is called. Since TTC takes polynomial time to perform (it takes at most $n$ steps to find a cycle, and at least one good is removed for each time step), there
exists a constant $c$ such that this step takes at most $n^c$ time. The overall run time is therefore at most $k!n^{k+c}$ as required.

\end{proof}

\section*{Acknowledgements}

A large part of this research took place while the first author was at the University of Helsinki, Finland. The project began with a chance meeting in the sauna of Töölö Towers; we thank the wonderful staff there. We also thank Jukka Suomela for many useful discussions. 

\bibliographystyle{plain}
\bibliography{bm}
\end{document}